\documentclass[sigconf, nonacm]{acmart}

\usepackage{float}  
\usepackage{pgfplots}     
\pgfplotsset{compat=1.18} 
\usepackage{amsmath,amsfonts}
\usepackage{tikz}
\usetikzlibrary{shapes,arrows,positioning,calc}
\usepackage{graphicx}
\usepackage{textcomp}
\usepackage{booktabs}
\usepackage[a-1b]{pdfx}
\usepackage{subcaption}
\usepackage[dvipsnames]{xcolor}
\usepackage{colortbl}
\usepackage{bbding}
\usepackage{braket}
\usepackage{epsfig,latexsym,graphics}
\usepackage[font={small}]{caption}
\usepackage{setspace}
\usepackage{enumitem}
\pgfplotsset{compat=1.16}
\usepackage{cmd}
\usepackage[spacesep,definevectors]{easyvector} 
\usepackage{tabularx}
\usepackage{makecell}
\usepackage{multirow}
\usepackage{adjustbox}

\begin{document}

\title{BubbleRAG: Evidence-Driven Retrieval-Augmented Generation for Black-Box Knowledge Graphs}


\author{Duyi Pan}
\affiliation{%
  \institution{HKUST (GZ)}\country{China}
}
\email{dpan457@connect.hkust-gz.edu.cn}

\author{Tianao Lou}
\affiliation{%
  \institution{HKUST (GZ)}\country{China}
}
\email{tianaolou@gmail.com}

\author{Xin Li}
\affiliation{%
  \institution{HKUST (GZ)}\country{China}
}
\email{xli420@connect.hkust-gz.edu.cn}

\author{Haoze Song}
\affiliation{%
  \institution{HKUST (GZ)}\country{China}
}
\email{hsong492@connect.hkust-gz.edu.cn}

\author{Yiwen Wu}
\affiliation{%
  \institution{HKUST (GZ)}\country{China}
}
\email{ywu240@connect.hkust-gz.edu.cn}

\author{Mengyi Deng}
\affiliation{%
  \institution{HKUST (GZ)}\country{China}
}
\email{mdeng974@connect.hkust-gz.edu.cn}

\author{Mingyu Yang}
\affiliation{%
  \institution{HKUST (GZ) \& HKUST}\country{China}
}
\email{myang250@connect.hkust-gz.edu.cn}

\author{Wei Wang}
\affiliation{%
  \institution{HKUST (GZ) \& HKUST}\country{China}
}
\email{weiwcs@ust.hk}



\begin{abstract}
Large Language Models (LLMs) exhibit hallucinations in knowledge-intensive tasks. Graph-based retrieval augmented generation (RAG) has emerged as a promising solution, yet existing approaches suffer from fundamental recall and precision limitations when operating over black-box knowledge graphs---graphs whose schema and structure are unknown in advance. We identify three core challenges that cause recall loss (semantic instantiation uncertainty and structural path uncertainty) and precision loss (evidential comparison uncertainty). To address these challenges, we formalize the retrieval task as the Optimal Informative Subgraph Retrieval (OISR) problem---a variant of Group Steiner Tree---and prove it to be NP-hard and APX-hard. We propose BubbleRAG, a training-free pipeline that systematically optimizes for both recall and precision through semantic anchor grouping, heuristic bubble expansion to discover candidate evidence graphs (CEGs), composite ranking, and reasoning-aware expansion. Experiments on multi-hop QA benchmarks demonstrate that BubbleRAG achieves state-of-the-art results, outperforming strong baselines in both F1 and accuracy while remaining plug-and-play.
\end{abstract}



\keywords{Large Language Models, Retrieval-Augmented Generation, Knowledge Graphs}


\maketitle

\section{Introduction}

Large Language Models (LLMs) have demonstrated remarkable capabilities in natural language processing and reasoning across diverse domains\cite{2025-surveylargelanguagemodels,2025-surveylargelanguagemodelsan,touvron2023llamaopenefficientfoundation}. Despite these advancements, LLM inference remains prone to hallucinations, and the static nature of their training data often leads to outdated knowledge\cite{2024-hallucination,2025-hallucinationinevitableinnatelimitation,2025-surveyhallucination,2025-medicalhallucinations,2024-comprehensivesurveyhallucinationmitigation}. To mitigate these limitations, Retrieval-Augmented Generation (RAG) has emerged as a critical paradigm, augmenting generation by retrieving external information at inference time\cite{2024-ragsurvey,2024-surveyretrievalaugmentedgenerationl,2021-ragnlp}.

For complex queries that span multiple documents or require multi-step reasoning chains, \emph{graph-structured} retrieval is especially beneficial. Knowledge Graphs (KGs) provide a fixcomplementary form of external knowledge that alleviates key limitations of text-centric approaches\cite{2022-surveykg,han-2026-ragvsgraphrag}. By organizing information into entities, relations, and constraints, KGs make cross-document dependencies explicit and support structured evidence composition and symbolic reasoning\cite{2025-surveykgrag,2025-surveygrag,ma-2025-llmmeetgraph,zhu-2026-surveyforgrag}. The retrieval component of such a KG-RAG system must achieve two goals simultaneously: \emph{high recall}, which retrieves all evidence relevant to the query, and \emph{high precision}, which avoids the inclusion of noisy or irrelevant context that degrades generation quality.


\begin{figure}[!t]
    \centering
    \includegraphics[width=\linewidth]{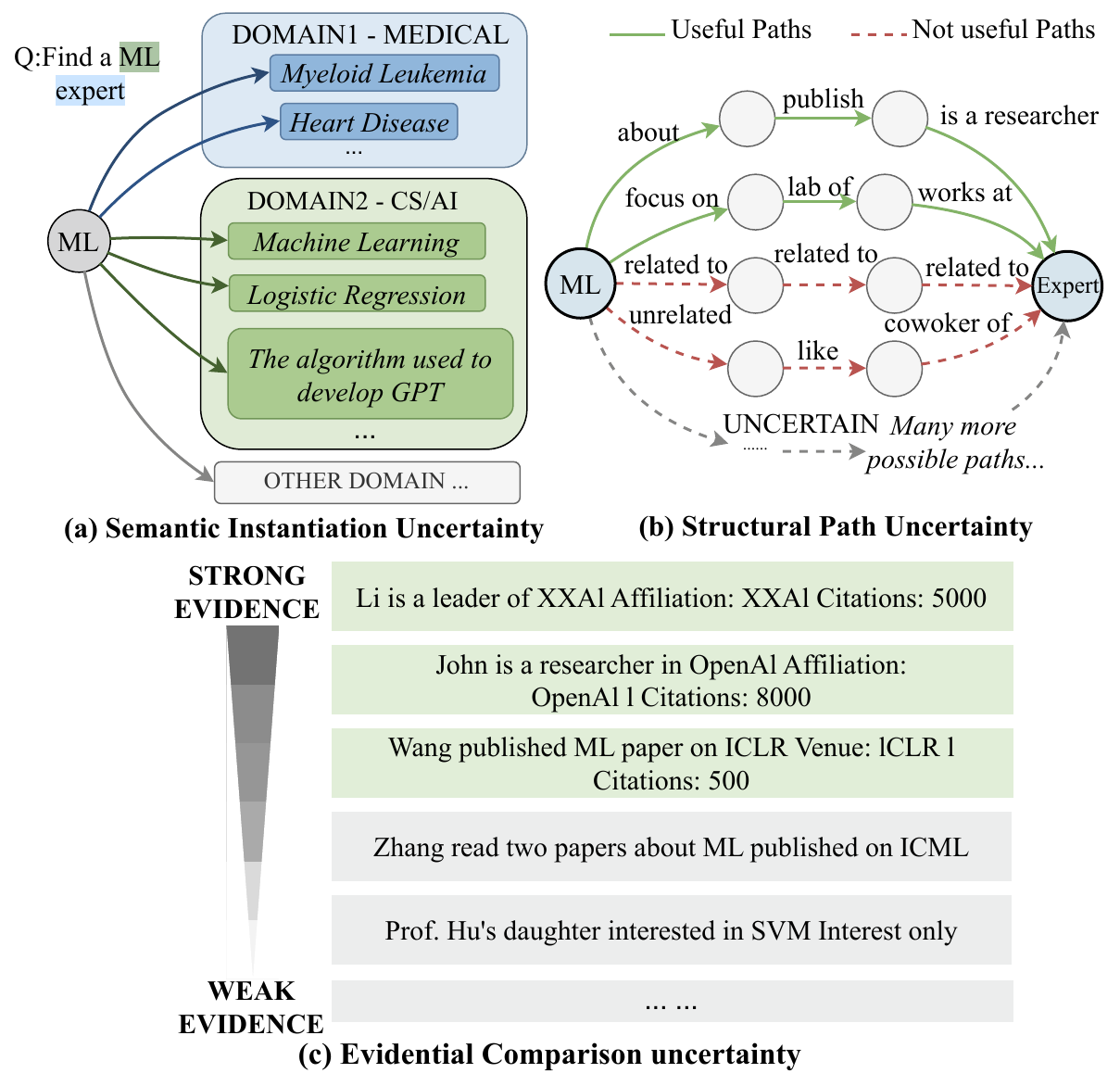}
    \caption{Three challenges in black-box Knowledge Graph retrieval that limit recall and precision: (a) Semantic instantiation uncertainty when grounding query concepts into KG entities, risking recall loss; (b) Structural path uncertainty when determining relevant relational chains, risking recall loss; and (c) Evidential comparison uncertainty when ranking candidates based on implicit evidence, risking precision loss}
    \label{fig:example}
\end{figure}

However, achieving both high recall and high precision in graph-based RAG is fundamentally difficult when the retriever operates over a \emph{black-box} KG, one whose schema, entity types, and relational structure are unknown in advance. We identify three core challenges:
(1) \textbf{Semantic instantiation uncertainty.} A query concept may be materialized in multiple heterogeneous forms, including explicit labels, aliases, attribute values, or even implicit relational patterns, so a query token does not uniquely determine where and how the concept appears in the graph. Consequently, retrieval can fail before any reasoning begins due to misalignment between query semantics and KG realizations. This directly causes \emph{recall loss}: relevant entities are missed because the retriever cannot locate them under their actual surface forms.
(2) \textbf{Structural path uncertainty.} Even when relevant entities are found, the retriever still lacks the schema knowledge needed to identify informative relational connections for the query. The same high-level relation may correspond to direct edges, multi-hop chains, or composite structures, making fixed-hop traversal or predefined patterns brittle across graphs. This again causes \emph{recall loss}: relevant relational chains are missed because the retriever applies the wrong traversal strategy.
(3) \textbf{Evidential comparison uncertainty.} When multiple candidates satisfy the constraints, the KG rarely encodes high-level notions such as expertise or importance explicitly. The retriever must instead aggregate implicit signals (e.g., publications, affiliations, citations) to support query-specific ranking, turning retrieval from graph matching into multi-factor evidence aggregation and comparative inference. This causes \emph{precision loss}: without discriminative signals, irrelevant or lower-quality candidates are ranked alongside or above the true evidence.

Together, the first two challenges cause the retriever to \emph{miss} relevant evidence, while the third prevents it from \emph{distinguishing} the most informative evidence among candidates, jointly constraining both the recall and precision of KG retrieval.

\begin{exmp}
As illustrated in Fig.~\ref{fig:example}, consider the query \textit{find an ML expert}. For the keyword \textit{ML} in a black-box KG, it could be mapped to the Medical domain or the CS/AI domain, since we do not know the KG's detailed schema. Even if we identify the right domain, it could still be mapped to different concepts, such as \textit{Logistic Regression} or \textit{The algorithm used to develop GPT.} The latter is harder to locate and may require reasoning. However, they are all concepts that represent ML. Even if we find the correct concept, we still need to determine how \textit{ML} is connected to \textit{Expert.} There could be many ways to do this within our KG, and it is hard to know the optimal linking path in advance in a black-box setting. Consequently, after finding numerous possible concepts and the linking ways between them, some will be useful while others are not. To truly find an expert, the system must assess the level of their expertise and compare them against other candidates. In this example, the first two uncertainties risk missing the correct expert entirely (recall failure), while the third risks ranking a less-qualified candidate above the true expert (precision failure).
\end{exmp}

\stitle{Prior Solutions.}
Prior work tackles the black-box difficulty through four main paradigms, yet each suffers from fundamental recall or precision limitations.
(a)~\textit{Schema-translation methods}\cite{2025-simgrag,2023-queryrewriting,2025-BYOKG-RAG} generate ideal subgraph patterns or decompose queries into pre-conceived triple structures for alignment. However, the generated patterns can only cover linking structures that the LLM already knows about, leading to limited recall when the actual graph topology diverges from the LLM's prior assumptions, particularly for distant or implicit connections.
(b)~\textit{Random-walk methods}\cite{2024-hipporag,2025-ragmemory,2025-linearrag,2025-agrag} utilize algorithms like Personalized PageRank (PPR)\cite{1999-pagerank} from semantic anchors to estimate node relevance. Yet they inherently suffer from hub bias, where high-degree nodes absorb disproportionate probability mass regardless of query relevance, and they can be inefficient when the evidence subgraph is large or spans multiple loosely connected components.
(c)~\textit{Iterative multi-hop methods}\cite{2025-levelrag,2025-graphsearch,2026-thinkongraph3,2024-thinkongraph,2025-thinkongraph2,2024-reasoninggraphs} expand from single anchor nodes via beam search or constrained breadth-first search. They are highly sensitive to initial anchor selection, and a single misalignment can cause cascading retrieval failures across the reasoning chain. In addition, they typically operate only over vertex-centric expansion, missing evidence encoded in edge labels or relational predicates.
(d)~\textit{Pre-indexed structure methods}\cite{2024-edge2024local,2024-kag,2025-cluerag} build hierarchical indices or community structures during offline pre-processing. These static, query-agnostic structures are expensive to construct and maintain, domain-dependent, and ill-suited for handling the diverse requirements of dynamic queries\cite{2025-indepthanalysisgraphbasedrag}.

None of these paradigms jointly optimizes for recall and precision within a unified retrieval objective: methods in (a)--(c) focus on finding \emph{some} relevant subgraph without systematic comparison, while (d) provides global context at the cost of query specificity.

\stitle{Our Solution.}
In this paper, we propose \textbf{BubbleRAG}, a search-based and training-free retrieval framework for black-box knowledge graphs that explicitly optimizes for both recall and precision.
To achieve high recall, BubbleRAG models the retrieval task as finding \emph{candidate evidence graphs} (CEGs), which are connected subgraphs that cover groups of semantic anchors derived from the query. Each query concept is mapped to a \emph{group} of candidate anchor nodes or edges in the KG, tolerating the heterogeneous realizations typical of black-box graphs. We formalize the task of selecting the most informative CEG as the Optimal Informative Subgraph Retrieval (OISR) problem, a variant of Group Steiner Tree, and design a heuristic Bubble Expansion algorithm to efficiently enumerate high-quality CEGs.
To achieve high precision, BubbleRAG ranks the discovered CEGs by a composite score that balances semantic relevance and structural completeness, then applies controlled LLM-guided expansion to the top-ranked candidates to refine their evidence coverage.
Unlike existing methods that treat semantic grounding, structural discovery, and evidence ranking as separable components, BubbleRAG unifies them under a single optimization-driven pipeline, directly addressing the coupled challenges of black-box KG retrieval.

\stitle{Contribution.}
We summarize our main contributions as follows:

\sstitle{Problem formulation.} We formalize KG retrieval as the OISR problem, a variant of Group Steiner Tree that captures the dual goals of semantic coverage (recall) and information density (precision). We prove that OISR is NP-hard and APX-hard, establishing the theoretical foundations for our heuristic approach.

\sstitle{Framework.} We introduce BubbleRAG, a novel search-based and training-free graph RAG pipeline for black-box knowledge graphs. It systematically addresses recall through semantic anchor grouping and heuristic bubble expansion, and precision through candidate evidence graph ranking and reasoning-aware expansion.

\sstitle{Plug-and-play Efficiency.} BubbleRAG is computationally practical and plug-and-play. It requires no retriever fine-tuning or modifications to the underlying KG structure. Its localized subgraph construction and anisotropic search ensure that the retrieval complexity remains largely independent of the global graph size, making it easily scalable to massive knowledge graphs.

\sstitle{Experiment.} Comprehensive experiments on complex multi-hop QA benchmarks (HotpotQA, MuSiQue, 2WikiMultiHopQA) demonstrate that BubbleRAG achieves state-of-the-art results. It consistently outperforms strong structure-aware baselines in both F1 score and accuracy, with the largest gains on the most challenging benchmark (MuSiQue).

\section{Preliminaries}\label{sec:preliminaries}

\subsection{Graph-Based RAG}

A graph-based RAG system typically consists of three stages: (i)~\emph{graph indexing}, (ii)~\emph{graph retrieval}, and (iii)~\emph{answer generation}. In the indexing stage, input documents are segmented into text chunks, and an LLM extracts triples to construct a knowledge graph $G=(V,E)$; the original chunks are typically retained as an additional retrieval source. In the retrieval stage, given a user query $q$, the system searches the graph to collect a subgraph $G^* \subseteq G$ that contains the evidence needed to answer $q$. In the generation stage, the LLM produces an answer conditioned on the retrieved subgraph and, optionally, the associated text chunks for grounding.

Crucially, entities and relations extracted from different chunks, potentially from different source documents, are linked into a single graph, enabling \emph{cross-document reasoning} that text-chunk retrieval alone cannot support. This distinguishes graph-based RAG from standard vector-similarity retrieval (NaiveRAG): the retriever must respect and exploit the graph's relational structure, not merely rank isolated text passages by embedding proximity.

\subsection{Black-Box Knowledge Graphs}

We call a knowledge graph \emph{black-box} if its schema, the set of entity types, relation types, and their constraints, is not provided to the retrieval system. The retriever can only access the graph's topology and the textual content associated with nodes and edges, without knowing, for instance, which relation types connect which entity types.

This is the common case in practice. LLM-extracted KGs from heterogeneous corpora have no standardized schema; the same real-world relation may appear under different labels across documents. Real-world KGs such as Wikidata have complex, evolving schemas that are difficult to expose fully to a downstream retriever. Enterprise KGs may be proprietary or partially documented. Without schema knowledge, the retriever cannot rely on predefined traversal patterns, meta-paths, or type constraints; it must reason about the graph structure \emph{at query time}.

\subsection{Motivations}

Our approach is grounded in two key characteristics of answer-supporting subgraphs.

\begin{itemize}[leftmargin=*]
    \item \textbf{Motivation~1 (Semantic Anchoring).}\label{motiva:anchor}
    An evidence subgraph capable of answering $q$ should contain nodes that align explicitly or implicitly with the key concepts in the query. Even when the terminology varies, the relevant subgraph is expected to include semantically corresponding entities or relations that can serve as anchors for retrieval.
    \item \textbf{Motivation~2 (Topological Cohesion).\label{motiva:topo}}
    Relevant evidence is not arbitrarily scattered across the graph, but tends to form a connected and compact structure. In particular, the concepts required by $q$ are likely linked through relatively tight relational paths, suggesting that the answer-supporting evidence can be captured by discovering a cohesive connecting backbone and then expanding locally as needed.
\end{itemize}

\begin{exmp}
Consider the query \textit{``When did Lothair II's mother die?''} Motivation~1 holds because the query concepts, \textit{Lothair~II}, \textit{mother}, and \textit{death date}, have corresponding anchors in the KG, even if they appear as \textit{``Lothar~II,''} a \textit{child\_of} relation, or an intermediate \textit{family} node. Motivation~2 holds because these anchors and the answer entity are connected through a compact subgraph (e.g., Lothair~II $\xrightarrow{\text{mother}}$ Gisela $\xrightarrow{\text{died}}$ 860~AD), rather than being scattered across disconnected components.
\end{exmp}

Together, these motivations suggest a retrieval strategy that (i)~identifies candidate anchor nodes for each query concept (Motivation~1), (ii)~discovers compact connected subgraphs spanning these anchors (Motivation~2), and (iii)~ranks the resulting \emph{candidate evidence graphs} (CEGs) by their relevance to the query. We formalize the first two requirements in the following definition.

\begin{table}[!t]
  \caption{Summary of Notations}\label{tab:notations}\vspace{-2ex}
  \small
  \begin{tabular*}{\linewidth}{@{\extracolsep{\fill}} p{20mm} | p{68mm}}
    \toprule
    Notation & Description \\
    \midrule
    $G = (V, E)$ & Knowledge graph with nodes $V$, edges $E$ \\
    $\mathrm{val}(\cdot)$ & Value function for nodes and edges \\
    $q$ & Input user query \\
    $t_i$ & Keyword extracted from query $q$ \\
    $\mathcal{S} = \{S_1, \dots, S_m\}$ & Semantic anchor groups \\
    $w_i$ & Importance weight of group $S_i$ \\
    $G^* = (V^*, E^*)$ & Optimal informative subgraph (target) \\
    $G' = (V', E')$ & Localized search space subgraph \\
    $h$ & Hop threshold for subgraph extraction \\
    $B$ & Search budget for bubble expansion \\
    $\mathrm{Cost}_{\mathrm{sem}}(T)$ & Semantic dissonance cost of CEG $T$ \\
    $r_{\mathrm{miss}}$ & Missing mass (uncovered group weights) \\
    $\alpha$ & Penalty factor for structural completeness \\
    $\mathrm{Penalty}_{\mathrm{miss}}(T)$ & Structural incompleteness penalty \\
    $n$ & Number of top CEGs selected \\
    $d$ & Max depth for multi-hop expansion \\
    $G_{\mathrm{final}}$ & Unified evidence graph merging CEGs \\
    $C_{\mathrm{text}}$ & Text chunks linked with evidence graph \\
    \bottomrule
  \end{tabular*}
\end{table}

\subsection{Problem Formulation}

Motivations~1--2 motivate a retrieval formulation where the goal is to find a connected subgraph of $G$ that (a)~covers a set of anchor groups derived from the query concepts, and (b)~maximizes information density. Let $\mathcal{S} = \{S_1, S_2, \dots, S_k\}$ be a collection of \emph{semantic anchor groups}, where each $S_i$ contains candidate nodes or edges corresponding to the $i$-th query concept. Let $\mathrm{val}(\cdot)$ be a value function measuring the semantic relevance of a node or edge to the query. We assume $G$ is \emph{informationally sufficient} for $q$, i.e., at least one connected subgraph satisfying the coverage constraints below exists.

\begin{defn}[Optimal Informative Subgraph Retrieval (OISR)]\label{defn:OISR}
~\\
\begin{itemize}[leftmargin=1.5em]
    \item \textbf{Input:}
    A knowledge graph $G=(V,E)$, where each node $v \in V$ and each edge $e \in E$ is associated with a value function $\mathrm{val}(\cdot)$, and a collection of anchor sets $\mathcal{S} = \{S_1, S_2, \dots, S_k\}$, where each $S_i \subseteq (V \cup E)$.

    \item \textbf{Output:}
    A subgraph $G'=(V',E')$ such that $G' \subseteq G$.

    \item \textbf{Constraints:}
    \begin{itemize}
        \item \textbf{Connectivity:} $G'$ is a connected graph.
        \item \textbf{Multi-Set Coverage:} For each anchor set $S_i \in \mathcal{S}$, the retrieved subgraph must contain at least one element from that set, i.e., $(V' \cup E') \cap S_i \neq \emptyset, \forall i \in \{1, \dots, k\}$.
    \end{itemize}

    \item \textbf{Objective:}
    Maximize the average value of the nodes and edges in the selected subgraph:
    \begin{equation}
        \max_{G' \subseteq G} \Phi(G') = \frac{\sum_{v \in V'} \mathrm{val}(v) + \sum_{e \in E'} \mathrm{val}(e)}{|V'| + |E'|}
    \end{equation}
\end{itemize}
\end{defn}

The average-value objective encodes an \emph{information density} principle: it rewards compact subgraphs whose nodes and edges are all query-relevant, while penalizing subgraphs that achieve coverage through long, irrelevant paths. Total value would favor sprawling subgraphs; minimum value would be overly conservative and brittle to noise. Note that anchor sets may include edges ($S_i \subseteq V \cup E$), reflecting our system's ability to match query concepts against both entities and relations (see \S\ref{sec:bubble}).

We prove in \S\ref{sec:ceg-discovery} that OISR is NP-hard and APX-hard (Theorems~\ref{thm:nphard}--\ref{thm:apxhard}), motivating the heuristic approach developed in Section~\ref{sec:bubble}.



\begin{figure*}[t]
    \centering
    \includegraphics[width=\linewidth]{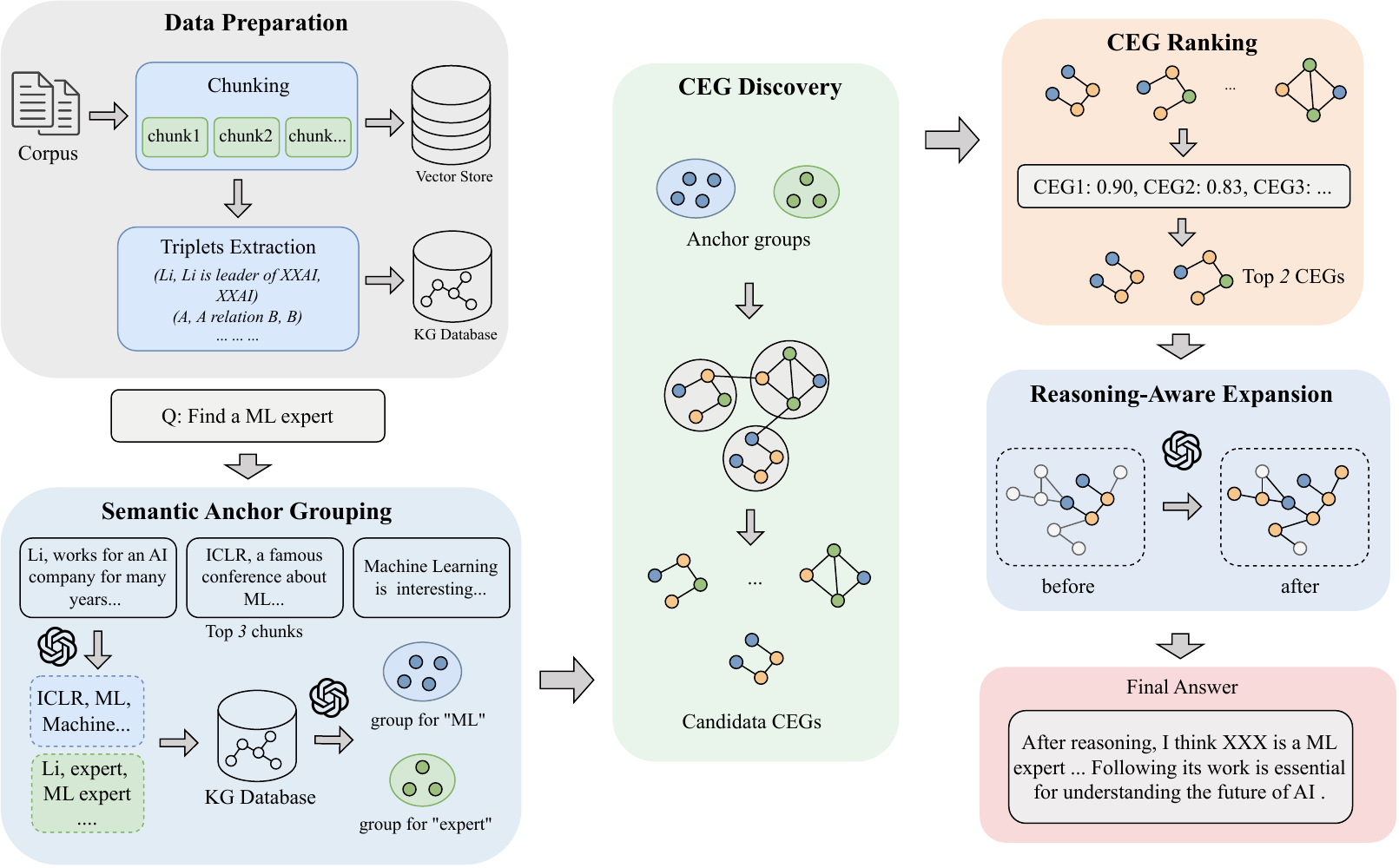}
    \caption{Pipeline of BubbleRAG.}
    \label{fig:pipeline}
\end{figure*}

\section{The BubbleRAG Framework}\label{sec:bubble}

BubbleRAG consists of an offline data preparation phase followed by four online retrieval stages, as illustrated in Figure~\ref{fig:pipeline}. The complete pipeline is summarized as follows:

\begin{itemize}[leftmargin=*, itemsep=0.3em, topsep=0.3em]
    \item \textbf{Step 1 (Data Preparation, \S\ref{sec:data-prep}):} Build a knowledge graph from the text corpus with semantically enriched edges, ensuring that both entities and relations carry matchable content for downstream retrieval.
    \item \textbf{Step 2 (Semantic Anchor Grouping, \S\ref{sec:anchor-group}):} Extract keywords from the query, find matching KG nodes/edges, and organize them into weighted anchor groups---serving recall by tolerating aliases and schema variations.
    \item \textbf{Step 3 (CEG Discovery, \S\ref{sec:ceg-discovery}):} Connect anchor groups via the Bubble Expansion heuristic to enumerate candidate evidence graphs (CEGs)---serving recall by discovering connected evidence structures.
    \item \textbf{Step 4 (CEG Ranking, \S\ref{sec:ceg-ranking}):} Score candidates by semantic relevance and structural completeness---serving precision by filtering out noisy or incomplete candidates.
    \item \textbf{Step 5 (Reasoning-Aware Expansion, \S\ref{sec:reasoning-expand}):} Apply LLM-guided multi-hop expansion to top-ranked CEGs---serving precision by refining evidence coverage for the most promising candidates.
\end{itemize}

We detail each step below.

\subsection{Data Preparation}\label{sec:data-prep}

Following prior work on LLM-based knowledge graph construction \cite{2025-lightrag,2024-hipporag}, BubbleRAG builds its knowledge graph using a standard pipeline: (1) chunking the corpus into passages, (2) using an LLM to extract triplets from each passage, and (3) indexing the triplets into a graph structure. Crucially, entities appearing in different chunks, potentially from different source documents, are linked into a single graph, enabling cross-document retrieval that text-chunk approaches cannot support.

A key distinction of BubbleRAG lies in its edge representation. Rather than treating relations as simple labels between entities, we embed rich text directly into the edges. Specifically, for a standard triplet $(A, R, B)$, the edge connecting node $A$ and node $B$ stores the combined textual content $ARB$. This enrichment enables edge-level semantic matching during anchor grouping (\S\ref{sec:anchor-group}), allowing the system to match query concepts against both entities and relations, which is essential for handling queries where the key constraint is a relational predicate (e.g., \textit{authored by,} \textit{died in}).

\subsection{Semantic Anchor Grouping}\label{sec:anchor-group}


This step operationalizes Motivation~1 (Semantic Anchoring) by ensuring that each query concept is mapped to a \emph{group} of candidate anchors, tolerating the heterogeneous realizations typical of black-box KGs. The group-based design directly serves recall: rather than committing to a single best-matching node (which may be wrong), we maintain multiple candidates per concept.

The primary objective of this stage is to extract necessary keywords, which covering both core concepts and relations---from a user query and map them to \emph{anchors} within the knowledge graph. A \emph{semantic anchor group} $S_i$ is a set of KG nodes or edges that are candidate realizations of the $i$-th query concept. The collection $\mathcal{S} = \{S_1, \ldots, S_m\}$, together with importance weights $\{w_1, \ldots, w_m\}$, forms the input to the next phase.

Given a query $q$, the system first retrieves the top-$k$ relevant text chunks from the corpus. If these chunks contain sufficient information to directly answer $q$, BubbleRAG terminates the graph traversal and outputs the response. If the information is insufficient, these chunks act as contextual evidence for the subsequent graph anchor extraction.


\stitle{Keyword Extraction and Latent Inference.}
A naive approach to obtaining keywords relies on Named Entity Recognition (NER), treating extracted surface mentions as direct query signals. However, real-world user queries frequently contain latent information critical for retrieval that is not explicitly stated, such as implied concepts, implicit relations, or underspecified constraints. BubbleRAG overcomes this by leveraging the reasoning capabilities of LLMs to go beyond literal matching: given a query $q$, the LLM extracts not only explicit keywords but also infers implicit yet necessary concepts. This facilitates more accurate mapping to anchors in a black-box knowledge graph, thereby reducing the reliance on long, sparse multi-hop traversals and enabling the system to target the relevant answer subgraph more directly. By inferring latent keywords, the system recovers query concepts that would be invisible to surface-level extraction, preventing recall loss from implicit references.

\begin{exmp}
For the query \textit{``Find scientific papers authored by the winner of the 1921 Nobel Prize in Physics,''} a simple extractor might only identify \textit{Nobel Prize}, \textit{1921}, and \textit{scientific papers.} BubbleRAG, instead, infers that \textit{Albert Einstein} is the key anchor concept for the query.
\end{exmp}


\stitle{Anchor Specialization.}
After extracting keywords $\{t_i\}$ from $q$, BubbleRAG retrieves a top-$k$ pool of candidate nodes for each $t_i$. However, many keywords become \emph{underspecified} when used as standalone query signals. A generic keyword like \textit{mother} matches thousands of nodes globally, but only a few are relevant to the specific query context. We prompt the LLM to rewrite each underspecified keyword into a query-conditioned constraint, explicitly binding it to other anchor intents already present in $q$. For instance, instead of retrieving candidates for the generic term \textit{mother}, BubbleRAG refines the signal into \textit{Lothair II's mother}. This forces the similarity search to prioritize locally compatible candidates, improving precision (filtering global false positives) without sacrificing recall (the refined query is strictly more specific, so any true match still qualifies).

\begin{exmp}
For the query \textit{``When did Lothair II's mother die?''}, the keyword \textit{Lothair II} typically maps to a tight candidate pool near the intended subject region. Conversely, the standalone keyword \textit{mother} retrieves a flood of generic ``mother-of'' relations and maternal entities scattered across the graph. Crucially, many of these top-$k$ matches fall outside the vicinity of the \textit{Lothair II} anchor candidates, thereby consuming the limited retrieval budget without improving the likelihood of forming a high-quality anchor group.
\end{exmp}


\stitle{Schema Relaxation.}
While anchor specialization tightens the search, the opposite problem can also arise: a query concept may not match any KG label exactly due to schema differences. Before finalizing the graph query, BubbleRAG uses the text chunks retrieved in the pre-retrieval step as previews of local communities within the knowledge graph. A retrieved chunk often reveals the co-occurrence of multiple keyword intents, providing preliminary evidence that certain anchors are locally connected in a promising region. Guided by this preview, BubbleRAG can strategically relax schema-sensitive conditions to preserve recall: if a chunk confirms the presence of highly relevant neighbors, a strict structural condition may be relaxed to a necessary (but not sufficient) one. Schema relaxation prevents recall loss caused by rigid label matching in KGs with non-standardized schemas.

\begin{exmp}
Consider the query \textit{``Find the second-married woman whose son started a war.''} A strict search for a relation labeled \textit{``second marriage''} may fail if the graph merely uses a generic \textit{``spouse''} label. If a retrieved chunk states \textit{``...she married X, and later her son Y launched an invasion...''}, BubbleRAG detects the high relevance of this region and allows \textit{``second marriage''} to be relaxed to \textit{``marriage''}. Although \textit{``marriage''} is globally noisy, within this specific community, it is highly likely to be the correct anchor because the required contextual neighbors (son, war) are confirmed to be nearby.
\end{exmp}


\stitle{Anchor Grouping and Importance Weighting.}
Following the retrieval of top-$k$ candidates, we accumulate an initial pool of potential anchors. Since distinct keywords may map to overlapping concepts, BubbleRAG invokes an LLM to perform \emph{semantic anchor grouping}, merging candidates that refer to the same underlying intent into cohesive groups, denoted as $\mathcal{S}=\{S_1,S_2,\dots,S_m\}$. Concretely, the LLM receives the full set of retrieved candidate nodes along with the query, and clusters them into groups based on which query concept they correspond to. Candidates matching the same concept (possibly via different keywords) are merged into one group.

In real-world knowledge graphs, perfect coverage of all query elements is rarely guaranteed; some concepts may be entirely missing or structurally disconnected. BubbleRAG therefore permits partial alignment while establishing a mechanism to evaluate match quality. The LLM assigns a normalized importance weight $w_i$ to each group $S_i$ ($\sum_{i=1}^{m} w_i = 1$). The LLM assigns weights based on centrality to the query's answering logic: the core subject entity typically receives the highest weight, while peripheral modifiers (e.g., temporal constraints, relational predicates) receive lower weights. For instance, in \textit{When did Lothair II's mother die?}, \textit{Lothair~II} is indispensable (high weight $\approx 0.5$), \textit{mother} is a necessary relation (medium weight $\approx 0.3$), while \textit{death date} is the target attribute (lower weight $\approx 0.2$). Missing the core subject is catastrophic; missing the temporal modifier may still allow partial retrieval. The weights enable graceful degradation: the ranking step (\S\ref{sec:ceg-ranking}) penalizes missing high-weight groups severely but tolerates missing low-weight groups, allowing the system to retrieve \emph{partially informative} subgraphs rather than returning nothing.

\subsection{Candidate Evidence Graph Discovery}\label{sec:ceg-discovery}


Following semantic anchor grouping, the query is represented by a collection of weighted anchor groups $\mathcal{S}=\{S_1,S_2,\dots,S_m\}$. To operationalize Motivation~2 (Topological Cohesion), we now seek \emph{candidate evidence graphs} (CEGs)---connected subgraphs of $G$ that cover these anchor groups while maximizing information density. This corresponds to the OISR problem defined in \S\ref{sec:preliminaries}.


\stitle{Problem Hardness.}
To determine the appropriate algorithmic strategy, we analyze the computational complexity of OISR. We prove that OISR can be reduced from the NP-hard Group Steiner Tree (GST) problem\cite{2003-plylog,2009-Karp-NP-NPC-NP-Reduce}.

\begin{thm}\label{thm:nphard}
The Optimal Informative Subgraph Retrieval problem is NP-hard.
\end{thm}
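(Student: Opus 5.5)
We prove Theorem~\ref{thm:nphard} by a polynomial reduction from the unweighted Steiner Tree problem in graphs, one of Karp's NP-complete problems~\cite{2009-Karp-NP-NPC-NP-Reduce}. It is the special case of Group Steiner Tree in which every group is a singleton. Its decision version reads: given a graph $H=(V_H,E_H)$, terminals $t_1,\dots,t_k\in V_H$ and an integer $K$, is there a tree in $H$ spanning all terminals with at most $K$ vertices? We work with the decision version of OISR: given a threshold $\theta$, is there a feasible $G'$ with $\Phi(G')\ge\theta$? Since the objective is an average, the difficulty is that giving all elements the same value makes $\Phi$ constant. The idea is therefore to attach a large block of high-value elements that every good solution will include. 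The remaining zero-value part then has to be as small as possible, and this converts ``maximize density'' into ``minimize tree size''.

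\emph{Construction.} Start from $H$ and give every original node and edge $\mathrm{val}=0$. Attach to $t_1$ a pendant path of $N$ fresh nodes $b_1,\dots,b_N$, using the edges $t_1b_1, b_1b_2,\dots,b_{N-1}b_N$. Give all $N$ fresh nodes and all $N$ fresh edges $\mathrm{val}=1$; any $N\ge1$ works, e.g.\ $N=|V_H|$. The anchor sets are the singletons $S_i=\{t_i\}$ for $i=1,\dots,k$. This instance has polynomial size. It satisfies the paper's informational-sufficiency assumption whenever the terminals lie in one component of $H$; if they do not, the Steiner instance is trivially a no-instance.

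\emph{Structure of optimal solutions.} Let $G'$ be any feasible subgraph. Its value $\sum\mathrm{val}$ equals the number of path elements it contains, and this number is at most $2N$.
\begin{itemize}
\item Adding a missing path element that is adjacent to $G'$ preserves connectivity. It adds $1$ to the numerator and $1$ to the denominator. Since $\Phi(G')<1$ (because $t_1$ has value $0$), this strictly increases $\Phi$. So we may assume $G'$ contains the whole path.
\item Let $T=G'\cap H$. This part is connected, because the path hangs off $t_1$ only. It contributes $0$ to the numerator, so deleting any non-tree edge of $T$, or any non-terminal leaf of $T$, keeps feasibility and strictly increases $\Phi$. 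Hence $T$ may be assumed to be a Steiner tree with $|V_T|$ vertices and $|V_T|-1$ edges.
\end{itemize}
Consequently the optimum equals $\max_T \frac{2N}{2N+2|V_T|-1}$ over Steiner trees $T$ in $H$. This quantity is strictly decreasing in $|V_T|$.

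\emph{Correctness.} Set $\theta=\frac{2N}{2N+2K-1}$. The OISR instance has a solution with $\Phi\ge\theta$ if and only if $H$ has a Steiner tree on at most $K$ vertices. The forward direction follows from the normalization argument above. The backward direction takes the given tree and appends the path. Hence OISR is NP-hard.

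The step I expect to need the most care is the normalization argument, i.e.\ showing that an optimal $G'$ can be assumed to contain the entire bonus path and a tree on the original part. The key facts are that $\Phi<1$ strictly, so adding value-$1$ elements always helps, and that $\Phi>0$ once the path is present, so removing value-$0$ elements always helps. Both facts should be stated explicitly so that the monotone map $|V_T|\mapsto\Phi$ is justified.
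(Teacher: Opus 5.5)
Your reduction is correct, and it takes a different route from the paper's. The paper reduces from Group Steiner Tree on the unchanged graph. It adds one extra anchor set $S_0=\{v^*\}$ for an arbitrary \emph{existing} vertex $v^*$, with $\mathrm{val}(v^*)=1$ and every other value $0$. This pins the numerator of $\Phi$ to exactly $1$, so $\Phi(G')=1/(|V'|+|E'|)$, and the threshold $1/\gamma$ becomes a size bound. That buys three things: no exchange argument is needed, it handles arbitrary (non-singleton) groups, and it gives the identity $OPT_{OISR}=1/OPT_{GST}$ that the paper reuses for Theorem~\ref{thm:apxhard}.

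The weakness of that route is that forcing an existing vertex into every feasible solution changes the problem. A small group Steiner tree need not contain $v^*$; for example, take the path $a,b,c,d$ with $S_1=\{a\}$, $v^*=d$ and $\gamma=1$. So the direction from a GST yes-instance to an OISR yes-instance fails, despite the paper's remark that the choice of $v^*$ is immaterial.

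Your pendant gadget at a terminal is precisely the repair. The bonus elements can be appended to every Steiner tree at a fixed cost $2N$ without creating new connections among original vertices. You also explicitly convert the vertex bound $K$ into $|V'|+|E'|=2|V_T|-1$, which the paper leaves implicit. What you pay is the normalization argument, needed because the bonus is not forced by an anchor set, and the restriction to singleton groups, which is harmless since Steiner Tree is already NP-hard. With $N=1$, your instance is essentially the paper's with $v^*$ taken as a fresh pendant vertex at $t_1$.

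One detail to tidy: missing path elements can only be added in edge--node pairs, since a lone node would be disconnected and a lone edge lacks an endpoint. So each step adds $2$ to the numerator and denominator, not $1$. The strict increase still follows from $\Phi<1$.
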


\textit{Proof.} We define the decision version of OISR as follows: Given a graph $G=(V,E)$, a value function $\mathrm{val}(\cdot)$, a collection of anchor sets $\mathcal{S} = \{S_1, \dots, S_k\}$, and a threshold $\alpha \in \mathbb{R}^+$, does there exist a connected subgraph $G'=(V',E') \subseteq G$ such that:
\begin{enumerate}[label=(\roman*)]
    \item $(V' \cup E') \cap S_i \neq \emptyset$ for all $i \in \{1, \dots, k\}$, meaning all anchor sets are covered;
    \item The average value satisfies $\frac{\sum_{v \in V'} \mathrm{val}(v) + \sum_{e \in E'} \mathrm{val}(e)}{|V'| + |E'|} \ge \alpha$.
\end{enumerate}

Now, consider an arbitrary unweighted Group Steiner Tree (GST) instance with graph $G(V, E)$, node groups $\mathcal{S}_{GST} = \{S_1, \dots, S_k\}$ where $S_i \subseteq V$, and a total size limit $\gamma$. We construct an OISR instance as follows:
\begin{itemize}
    \item Use the same graph $G$ as the knowledge graph.
    \item Construct the anchor sets as $\mathcal{S} = \{S_1, \dots, S_k\} \cup \{S_0\}$, where $S_0 = \{v^*\}$ for an arbitrary node $v^* \in V$. (The choice of $v^*$ does not affect the reduction's correctness.)
    \item Set $\mathrm{val}(v^*) = 1$, and set $\mathrm{val}(x) = 0$ for all other nodes $x \in V \setminus \{v^*\}$ and all edges $e \in E$.
    \item Set the threshold $\alpha = \frac{1}{\gamma}$.
\end{itemize}

Because any feasible subgraph $G'$ must cover $S_0$, it must contain $v^*$. Since $v^*$ is the only element with a non-zero value, the numerator of the objective function is exactly $1$. Condition (ii) thus becomes:
\begin{equation*}
    \frac{1}{|V'| + |E'|} \ge \frac{1}{\gamma} \implies |V'| + |E'| \le \gamma
\end{equation*}

Therefore, a Group Steiner tree of size $\le \gamma$ exists in the GST instance if and only if a feasible subgraph exists in the OISR instance. This establishes a polynomial-time reduction, proving that OISR is NP-hard. \hfill$\square$

\begin{thm}\label{thm:apxhard}
The OISR problem is hard to approximate.
\end{thm}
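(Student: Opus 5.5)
The plan is to strengthen the reduction behind Theorem~\ref{thm:nphard} into an approximation-preserving one (an L-reduction, or at least a reduction that changes the ratio by only a constant factor). We start from unweighted Group Steiner Tree. GST is hard to approximate within $\Omega(\log^{2-\epsilon} k)$ unless $\mathrm{NP}\subseteq \mathrm{ZTIME}(n^{\mathrm{polylog}(n)})$ (Halperin--Krauthgamer). Its special case Steiner Tree, where every group is a singleton, is APX-hard (Chlebík--Chlebíková). Any constant-factor-preserving reduction therefore transfers both lower bounds to OISR, which is what Theorem~\ref{thm:apxhard} asserts.

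\textbf{Step 1: the construction.} Given a GST instance $(G,\{S_1,\dots,S_k\})$, first check in polynomial time whether some single node meets all groups; if so, the optimum is $0$ and we solve the instance exactly. Otherwise the optimal GST cost $c^*$ is at least $1$. Next, instead of an arbitrary $v^*$ (which would change the optimum), we range over every guess $v^*\in S_1$. For each guess we build the OISR instance of Theorem~\ref{thm:nphard}: add the group $S_0=\{v^*\}$, set $\mathrm{val}(v^*)=1$, and set every other value to $0$. Some guess lies in an optimal group Steiner tree, so the best guess preserves the optimum exactly. For that guess $\Phi(G')=1/(|V'|+|E'|)$, so maximizing $\Phi$ is the same as minimizing the size $|V'|+|E'|$ of a connected covering subgraph containing $v^*$.

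\textbf{Step 2: relating the two objectives.} All anchors here are nodes. Deleting non-tree edges, and then pruning leaves that cover no group other than $v^*$, preserves connectivity and coverage and never increases the size. Hence every feasible $G'$ can be converted in polynomial time into a group Steiner tree $T$ with $|V(T)|+|E(T)|\le |V'|+|E'|$. For a tree with $c$ edges the size is $2c+1$, so the optimal OISR size is exactly $2c^*+1$. Suppose an algorithm achieves ratio $\rho$ for OISR, i.e.\ $\Phi(G')\ge \Phi^*/\rho$. Then $|V'|+|E'|\le \rho(2c^*+1)$, and the pruned tree has cost
\[
c\le \frac{\rho(2c^*+1)-1}{2}\le \rho c^*+\frac{\rho-1}{2}\le \frac{3\rho-1}{2}\,c^*,
\]
using $c^*\ge 1$. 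Running over all guesses and keeping the cheapest tree therefore gives a $(3\rho-1)/2=O(\rho)$-approximation for GST.

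\textbf{Step 3: conclude.} A constant-factor OISR approximation would yield a constant-factor approximation for Steiner Tree, contradicting its APX-hardness. An $o(\log^{2-\epsilon}k)$ OISR approximation would contradict the GST bound; here the number of groups only grows from $k$ to $k+1$, which does not affect this asymptotic bound. The step I expect to need the most care is Step 2: the additive offset from counting both nodes and edges, together with the $c^*=0$ corner case, must be absorbed into a multiplicative constant. The guessing of $v^*$ is also needed, because the fixed-$v^*$ construction in Theorem~\ref{thm:nphard} does not preserve optimal values.
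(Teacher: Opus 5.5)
Your main line is the paper's own reduction, and it is sound. A single value-$1$ node $v^*$ is forced in through the singleton group $S_0$, so that $\Phi(G')=1/(|V'|+|E'|)$, and the $\log^{2-\epsilon}k$ GST bound is then transferred. Your version is in fact more careful than the paper's. The paper fixes an arbitrary $v^*$ and asserts $OPT_{OISR}=1/OPT_{GST}$ with a GST ratio of $1/\rho$. This fails whenever $v^*$ lies far from every optimal group Steiner tree. It also silently identifies $|V'|+|E'|$ with the edge cost to which the cited hardness refers. Your guessing of $v^*\in S_1$, the pruning to a tree, the $c^*=0$ check, and the bound $c\le\frac{3\rho-1}{2}c^*$ repair exactly these points. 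So you correctly obtain that OISR admits no $O(\log^{2-\epsilon}k)$-approximation, in particular no constant-factor one, unless $\mathrm{NP}\subseteq\mathrm{ZTIME}(n^{\mathrm{polylog}(n)})$.

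One inference in Step 3 is wrong, however. A constant-factor approximation for Steiner Tree would not contradict its APX-hardness, because Steiner Tree has constant-factor approximations (e.g.\ the $2$-approximation via a minimum spanning tree of the metric closure). APX-hardness only rules out a PTAS, and that is what this branch transfers. Since $\frac{3\rho-1}{2}\to 1$ as $\rho\to 1$, a $(1+\delta)$-approximation for OISR would give a $(1+\frac{3}{2}\delta)$-approximation for Steiner Tree. No guessing is needed here, since $S_1$ is a singleton. Hence OISR has no PTAS unless $\mathrm{P}=\mathrm{NP}$.

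Keep the two conclusions separate: no PTAS from Steiner Tree under $\mathrm{P}\neq\mathrm{NP}$, and no constant factor from the GST bound under the quasi-polynomial assumption. If you want ``no constant factor'' under $\mathrm{P}\neq\mathrm{NP}$ alone, note that unit-weight GST on a star is exactly Set Cover, so your reduction also transfers its $\Omega(\log k)$ hardness.
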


\textit{Proof.} We show that a constant-factor approximation for OISR would imply a constant-factor approximation for GST, contradicting established inapproximability bounds.
Using the approximation-preserving nature of the reduction established above, let $OPT_{GST}$ be the size of the optimal Group Steiner Tree for the given GST instance. Based on our construction, the optimal objective value for the OISR instance is exactly $OPT_{OISR} = \frac{1}{OPT_{GST}}$.

Suppose there exists a polynomial-time algorithm that achieves a constant approximation ratio $\rho > 0$ for OISR. Let $ALG_{OISR}$ be the objective value obtained by this algorithm, which satisfies:
\begin{equation}
    ALG_{OISR} \ge \rho \cdot OPT_{OISR}
\end{equation}

By the construction of our OISR instance, any retrieved subgraph yielding an objective value of $ALG_{OISR}$ corresponds to a feasible Group Steiner Tree of size $ALG_{GST} = \frac{1}{ALG_{OISR}}$. Substituting this relation into the inequality gives:
\begin{equation}
    \frac{1}{ALG_{GST}} \ge \rho \cdot \frac{1}{OPT_{GST}} \implies ALG_{GST} \le \frac{1}{\rho} \cdot OPT_{GST}
\end{equation}

Since $\rho$ is a constant, this implies the existence of a polynomial-time algorithm that approximates the Group Steiner Tree problem within a constant factor of $\frac{1}{\rho}$. However, it is a well-established result that GST on general graphs cannot be approximated within a factor of $O(\log^{2-\epsilon} k)$ unless $\mathrm{NP} \subseteq \mathrm{ZTIME}(n^{\mathrm{polylog}(n)})$\cite{2003-plylog}. This contradiction proves that no such constant-factor approximation algorithm can exist for OISR, rendering the problem APX-hard. Since this hardness holds even for the restricted class of binary-valued instances (val $\in \{0,1\}$), it holds a fortiori for general OISR. \hfill$\square$

Since no efficient exact or constant-factor approximation algorithm exists, we design a targeted heuristic to generate high-quality CEGs in practice.


\stitle{Heuristic Solution: Bubble Expansion.}
We term this heuristic \emph{Bubble Expansion} because the cost-guided wavefronts expanding from each anchor group resemble inflating bubbles: they grow anisotropically through low-resistance (query-aligned) regions and merge upon collision with bubbles from other groups.

To bridge the OISR objective with a practical graph traversal algorithm, we instantiate the value function as $\mathrm{val}(v) = \cos(z_q, z_v)$ and define a node cost:
\begin{equation}\label{eq:cost}
    \text{cost}(v) = 1 - \mathrm{val}(v) = 1 - \cos(z_q, z_v)
\end{equation}

\begin{figure*}[t]
    \centering
    \includegraphics[width=\linewidth]{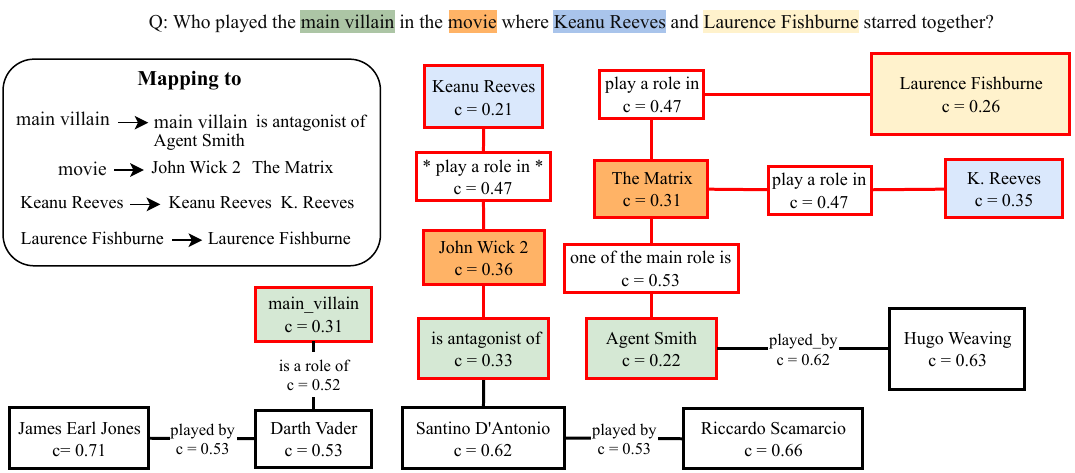}
    \caption{An example of the Candidate Evidence Graph (CEG) generated by Bubble Expansion. Different highlight colors in the query represent different extracted keywords, which are mapped to their corresponding semantic anchor groups in the graph. Elements with red borders indicate the nodes and edges that have been successfully incorporated into a CEG. }
    \label{fig:pipeline}
\end{figure*}

By driving the traversal via the accumulation of this semantic cost, the algorithm operationalizes the information density objective to rapidly discover query-aligned structural skeletons. The search consists of three phases:

\begin{enumerate}[leftmargin=*, itemsep=0.3em, topsep=0.3em]
    \item \textbf{Localized Graph Construction.}
    To constrain the search space, we first extract a localized subgraph $G'=(V',E')$. Starting from the anchor nodes in each semantic group $S_i$, we collect their $h$-hop neighborhoods (where parameter $h$ controls the localization range) and take the union of these subgraphs to form $G'$. This step substantially reduces computation by focusing exclusively on regions topologically reachable from the anchors within a limited budget.

    \item \textbf{Anisotropic Expansion (Multi-source, Cost-guided Search).}
    Within $G'$, we initiate a multi-source, Dijkstra-like expansion from all anchor nodes. Unlike standard Breadth-First Search (BFS) that expands uniformly by hop count, our expansion is \emph{anisotropic}: it propagates preferentially through nodes with a low \emph{accumulated semantic cost}. Concretely, for each node, we maintain: (i) the minimum accumulated cost to reach it from a given source group, (ii) a predecessor pointer for path reconstruction, and (iii) a bitmask recording which semantic anchor groups have reached it. The search naturally advances rapidly through low-resistance (query-aligned) regions while stalling in high-resistance (irrelevant) areas.

    \item \textbf{Collision Detection and Subgraph Fusion.}
    A critical event occurs when an expansion frontier reaches a node already covered by \emph{different} semantic anchor groups (i.e., its bitmask indicates multi-group coverage). We treat such a node as a candidate \textbf{Steiner node} (meeting point) and trigger a backtracing procedure: we reconstruct the low-cost paths from this node back to the involved anchor nodes using the predecessor pointers. These paths are subsequently fused into a connected CEG and added to the candidate set for final ranking.
\end{enumerate}


\stitle{Handling Intra-Group and Failed Connections.}
To ensure robustness in real-world scenarios, our heuristic accounts for two edge cases:
\begin{itemize}[leftmargin=*, itemsep=0.3em, topsep=0.3em]
    \item \textbf{Intra-Group Connections.} BubbleRAG permits connections among anchor nodes within the \textit{same} semantic group. Strong intra-group connectivity often indicates that relevant evidence is highly concentrated locally. We allow such fusions into the candidate set but apply a mild penalty during the ranking phase to encourage inter-group coverage and prevent a single group from dominating the retrieved subgraph.
    \item \textbf{Fallback Mechanism.} In exceptionally sparse graphs, anchor nodes may fail to collide within the expansion budget. Instead of returning an empty set, BubbleRAG falls back to the currently identified anchor nodes as a baseline. Although disconnected, this preserves non-empty evidence for subsequent reasoning.
\end{itemize}


\begin{algorithm}[t]
\DontPrintSemicolon
\SetKwInOut{Input}{Input}
\SetKwInOut{Output}{Output}

\Input{Knowledge graph $G=(V,E)$, semantic anchor groups $\mathcal{S}=\{S_1,\dots,S_m\}$, localization hop $h$, expansion budget $B$}
\Output{Set of candidate subgraphs $\mathcal{C}$}

\BlankLine
\textbf{// Phase 1: Localized Graph Construction}
$G' \gets \emptyset$\;
\ForEach{$S_i \in \mathcal{S}$}{
    $N_i \gets$ $h$-hop neighborhood of nodes in $S_i$\;
    $G' \gets G' \cup N_i$\;
}
$G' \gets (V', E')$ where $V',E'$ are nodes and edges in the union\;

\textbf{// Phase 2: Anisotropic Expansion}
Initialize priority queue $PQ$ with all terminals $t \in \bigcup_i S_i$\;
Initialize $cost[t] \gets 0$, $mask[t] \gets$ group bitmask of $t$, $pred[t] \gets \emptyset$\;
\While{$PQ$ is not empty and $|C| < B$}{
    $(v, c) \gets PQ.pop()$ with minimum accumulated cost\;
    \If{$c > cost[v]$}{continue\;}

    \ForEach{neighbor $u$ of $v$ in $G'$}{
        $new\_cost \gets c + cost(u)$\;
        $new\_mask \gets mask[v]$ OR group bitmask of $u$\;

        \If{$new\_cost < cost[u]$}{
            $cost[u] \gets new\_cost$\;
            $mask[u] \gets new\_mask$\;
            $pred[u] \gets v$\;
            $PQ.push((u, new\_cost))$\;

            \textbf{// Collision Detection}
            \If{$new\_mask$ covers multiple groups}{
                $T \gets$ backtrack paths from $u$ to terminals using $pred$\;
                $C \gets C \cup \{T\}$\;
            }
        }
    }
}

\textbf{// Phase 3: Fallback Handling}
\If{$C$ is empty}{
    $C \gets$ all terminals\;
}
\Return{$\mathcal{C}$}

\caption{Bubble Expansion Algorithm}
\label{algo:bubble}
\end{algorithm}


\stitle{Complexity Analysis.}
We analyze the computational complexity of the CEG discovery phase across its two main steps:

\begin{itemize}[leftmargin=*, itemsep=0.3em, topsep=0.3em]
    \item \textbf{Localized Graph Construction.}
    Given $m$ semantic anchor groups comprising $n = \sum_{i=1}^{m} |S_i|$ anchor nodes, we extract the $h$-hop neighborhood for each anchor. The maximum number of nodes reachable within $h$ hops from a single anchor is bounded by $O(d_{\text{avg}}^h)$, where $d_{\text{avg}}$ is the average node degree. Thus, the localized subgraph $G' = (V', E')$ satisfies $|V'| \leq n \cdot O(d_{\text{avg}}^h)$. For sparse LLM-constructed knowledge graphs and a small $h$, this step runs efficiently in $O(n \cdot d_{\text{avg}}^h)$ time.

    \item \textbf{Anisotropic Bubble Expansion.}
    The expansion maintains a state for each node $v \in V'$ containing: (i) minimum accumulated costs from each source group, (ii) predecessor pointers, and (iii) a bitmask of size $m$. The worst-case time complexity, implemented via a priority queue, is $O(|E'| \cdot m \cdot \log|V'|)$. In practice, the anisotropic nature of the search prunes high-cost paths early, keeping the empirical running time significantly below the worst-case bound. Collision detection and subgraph fusion run in $O(|V'| \cdot m)$ time via bitmask overlap checks.

    \item \textbf{Overall Complexity.}
    Combining both steps, the total time complexity is:
    \[
    O\big(n \cdot d_{\text{avg}}^h + |E'| \cdot m \cdot \log|V'| + |V'| \cdot m\big)
    \]
    Crucially, because $|V'| \ll |V|$ due to the localized construction, the retrieval complexity per query is largely independent of the global graph size $|V|$. This ensures that BubbleRAG scales efficiently to massive knowledge graphs. In our experiments, the localized subgraph $G'$ typically contains fewer than $10^3$ nodes even when the full KG has $10^5$+ nodes, confirming this scalability.
\end{itemize}

Moreover, the memory footprint is dominated by the localized subgraph $G'$ and the expansion state, requiring $O(|V'| + |E'| + m \cdot |V'|)$ space. This remains highly tractable for graphs with millions of nodes, provided the localization parameter $h$ is properly configured.

\subsection{Candidate Evidence Graph Ranking}\label{sec:ceg-ranking}


A key design choice in BubbleRAG is the \emph{decoupling} of CEG discovery (\S\ref{sec:ceg-discovery}) from CEG ranking. The discovery phase generates candidates using a fast graph-based heuristic with a simple cost function. The ranking phase then re-evaluates these candidates using a richer criterion that considers both semantic relevance and structural completeness. This decoupling offers two benefits: (a) the heuristic can explore broadly without being slowed by an expensive scoring function, and (b) the scoring function is \emph{plug-and-play}, meaning it can be replaced or augmented without changing the discovery algorithm.

Bubble Expansion in the previous phase generates a set of CEGs.
While these candidates are structurally connected, connectivity alone does not
guarantee evidentiary quality: a candidate may bridge anchors via a generic hub,
yielding a connected but semantically vacuous structure. Conversely, a
high-quality candidate may miss a minor keyword yet still capture the core
intent.

We therefore introduce a discriminative ranking mechanism that integrates
semantic dissonance and structural incompleteness. To align with
intuitive ranking (higher is better), we define the final score as the inverse
of a composite cost:
\begin{equation}
\text{Score}(T) = \frac{1}{\text{Cost}_{\text{sem}}(T)\cdot \text{Penalty}_{\text{miss}}(T) + \epsilon},
\label{eq:score}
\end{equation}
where $\epsilon$ is a small constant to prevent division by zero. The semantic dissonance approximates $1 - \Phi(G')$ (the inverse of the average value in the OISR objective), while the missing-group penalty enforces the coverage constraint with graceful degradation.

\stitle{Semantic dissonance.}
This component measures how well the nodes in a CEG align with the query. A high-quality CEG should contain nodes that are all semantically close to the query, not just a few relevant nodes diluted by many irrelevant ones.
For a node $v$, let $\text{cost}(v)=1-\cos(z_q,z_v)$. The semantic
cost of $T$ is:
\[
\text{Cost}_{\text{sem}}(T) = \frac{1}{|V_T|} \sum_{v \in V_T} \text{cost}(v).
\]
We use the mean rather than the sum to ensure \textit{size invariance}: the discovery phase
is search-oriented and favors low-resistance connectivity, whereas ranking
compares candidates of potentially different sizes. Averaging prevents
information-rich, longer evidence chains from being penalized solely for containing more nodes, as long as those nodes
remain query-relevant (low cost).

\stitle{Structural incompleteness.}
This component enforces the coverage requirement from the OISR formulation. A CEG that covers all anchor groups is strongly preferred over one that misses a group, especially a high-weight group.
Let $w_i$ be the importance weight of group $S_i$ (e.g.,
$w_{\text{Lothair}}=0.8$, $w_{\text{Mother}}=0.2$). The missing mass is
$r_{\text{miss}}=\sum_{i: S_i \cap V_T = \varnothing} w_i$, and the penalty is:
\begin{equation}
\text{Penalty}_{\text{miss}}(T) = e^{\alpha \cdot r_{\text{miss}}}.
\label{eq:penalty}
\end{equation}
This implements a weighted tolerance strategy: with a sufficiently large
$\alpha$, missing a critical group (high $w_i$) sharply increases the penalty
and strongly suppresses the score, whereas missing a peripheral group incurs
only a mild penalty. The hyperparameter $\alpha$ controls the strictness of the completeness
requirement.

\stitle{Supporting Diverse Query Semantics.}
The penalty factor $\alpha$ provides a principled mechanism to support different query semantics beyond the default balanced setting:
\begin{itemize}[leftmargin=*, itemsep=0.3em, topsep=0.3em]
    \item \textbf{AND queries} (all concepts required): A large $\alpha$ (e.g., $\alpha \ggg 1$) makes $e^{\alpha \cdot r_{\text{miss}}}$ grow sharply whenever any anchor group is uncovered, forcing the ranking to strongly prefer CEGs that cover \emph{all} query concepts simultaneously, effectively implementing conjunction semantics.
    \item \textbf{OR queries} (any concept sufficient): Setting $\alpha \approx 0$ collapses the penalty to $1$ regardless of missing groups, so ranking is driven purely by $\text{Cost}_{\text{sem}}(T)$. CEGs matching \emph{any} subset of anchor groups are treated equally in terms of coverage, effectively implementing disjunction semantics.
    \item \textbf{Comparison queries} (rank multiple candidates against each other): BubbleRAG naturally supports comparison queries through its top-$n$ CEG selection. When a query asks to compare entities (e.g., \textit{Who is more influential: Author A or Author B?}), each top-ranked CEG captures evidence centered on a different candidate. The $n$ selected CEGs are passed together to the reasoning-aware expansion and answer generation stages, where the LLM can directly compare evidence subgraphs side by side, with no architectural change required.
\end{itemize}
This flexibility allows BubbleRAG to handle a broad spectrum of query types by adjusting a single hyperparameter, without modifying the retrieval pipeline.

Based on $\text{Score}(T)$, we select the top-$n$ CEGs (parameter $n$ controls the number of selected candidates) as evidentiary foundations for the subsequent reasoning phase.

\begin{figure}[t]
    \centering
    \includegraphics[width=\linewidth]{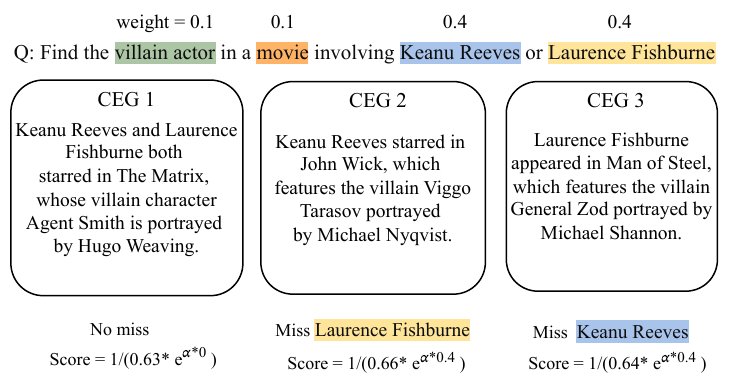}
    \caption{An example of Candidate Evidence Graph (CEG) Ranking. Different highlight colors in the query represent different semantic anchor groups, each assigned an importance weight. The final score of a CEG is determined by two components: its average semantic cost and a structural incompleteness penalty derived from missing concept groups. By simply adjusting the hyperparameter $\alpha$, the system can dynamically alter the ranking results to support AND operation and OR operation. The algorithm naturally supports compare queries by selecting the top-$n$ CEGs.}
    \label{fig:pipeline}
\end{figure}

\subsection{Reasoning-Aware Expansion}\label{sec:reasoning-expand}


The CEGs discovered in \S\ref{sec:ceg-discovery} represent minimal connected backbones spanning the anchor groups. However, the final answer entity may not be \emph{within} this backbone but in its \emph{immediate neighborhood}, one or two hops beyond the backbone's boundary. More generally, the backbone captures the \emph{reasoning chain} leading to the answer, while the answer itself requires one final inference step. This pattern arises whenever the query asks about a property of an entity that must itself be identified through intermediate reasoning.

For example, consider: \textit{Who are the lead actors in the movies
co-directed by Director A and Director B?} A ranked subgraph may already
bridge the two director anchors and identify their collaborative work (e.g.,
Movie X). Yet the user intent targets the actors of this movie,
which are neighbors of Movie X but were not part of the initial anchor
set. This pattern is general: the backbone provides the reasoning chain, while expansion uncovers the target property.

This step serves precision by focusing LLM reasoning on the most promising CEGs. Rather than expanding all candidates (expensive), we apply expansion only to the top-$n$ ranked CEGs, using the LLM's judgment to select the most relevant neighbors at each hop.

To address this, BubbleRAG performs a controlled, LLM-guided multi-hop
expansion. Starting from each selected candidate $T^*$, we iteratively expand
the current frontier up to a maximum depth $d$ (parameter $d$ controls the reasoning depth). At each hop $\ell\in[1,d]$, we
retrieve the immediate neighbors $\mathcal{N}_d(T^*)$ adjacent to the current
frontier, and prompt the LLM with the query and current evidence to select the most
promising neighbors. In the example above, once Movie X is identified,
the LLM preferentially selects neighbors connected by cast-related relations
(e.g., Actor Y), while discarding irrelevant attributes such as release
date or production company. The selected nodes/edges are then added to the
evidence structure:
\[
T^*_{d} = T^*_{d-1} \cup \text{Selected}\!\left(\mathcal{N}_d\right).
\]

The expansion continues until the depth limit $d$ is reached, or terminates
early when the LLM selects no new neighbors (or indicates that the current
evidence is sufficient). The expansion is naturally an anytime algorithm: it can be applied sequentially to CEGs in rank order, and stopped at any point when the time budget is exhausted, making BubbleRAG's precision-refinement phase controllable and cost-aware.

\subsection{Answer Generation}\label{sec:answer-gen}


Upon completion of reasoning-aware expansion, BubbleRAG merges the expanded subgraphs from the top-$n$ candidates into a single Unified Evidence Graph $G_{\text{final}}$, consolidating redundant nodes and edges that overlap across candidates. We then map nodes in $G_{\text{final}}$ back to their provenance in the source corpus (via the chunk pointers retained during indexing) and retrieve the corresponding original text chunks $C_{\text{text}}$. The assembled hybrid context, combining (i)~structured triples serialized from $G_{\text{final}}$ as a reasoning skeleton and (ii)~the associated raw text chunks as descriptive grounding, is fed into the LLM along with the original user query to produce the final answer.


\section{Experiments}

\begin{table*}[!htbp]
\centering
\caption{Comparison of Different Methods on Multi-hop QA Benchmarks. 
The best results are highlighted in bold. The rightmost two columns show the average F1 and ACC$_L$ across all datasets.}
\label{tab:main_results}
\renewcommand{\arraystretch}{1.25}
\setlength{\tabcolsep}{5pt}

\begin{adjustbox}{max width=\textwidth}
\begin{tabular}{c|rr|rr|rr|rr|rr|rr|rr}
\toprule
\multirow{2}{*}{\textbf{Method}} 
& \multicolumn{6}{c|}{\textbf{30B Model}} 
& \multicolumn{6}{c|}{\textbf{8B Model}} 
& \multicolumn{2}{c}{\textbf{Average}} \\ 
\cmidrule(lr){2-15}
& \multicolumn{2}{c|}{HotpotQA} & \multicolumn{2}{c|}{MuSiQue} & \multicolumn{2}{c|}{2Wiki} 
& \multicolumn{2}{c|}{HotpotQA} & \multicolumn{2}{c|}{MuSiQue} & \multicolumn{2}{c|}{2Wiki} 
& F1 & ACC$_L$ \\ 
\cmidrule(lr){2-15}
& F1 & ACC$_L$ & F1 & ACC$_L$ & F1 & ACC$_L$
& F1 & ACC$_L$ & F1 & ACC$_L$ & F1 & ACC$_L$ &  &  \\ 
\midrule
Vanilla LLM      & 17.84 & 29.70 & 5.80 & 10.40 & 20.14 & 30.30 & 11.41 & 30.10 & 5.18 & 12.90 & 17.46 & 29.90 & 12.97 & 23.88 \\
CoT + LLM        & 18.83 & 33.50 & 8.42 & 13.20 & 18.25 & 25.50 & 16.35 & 30.40 & 7.11 & 12.00 & 18.11 & 25.50 & 14.51 & 23.35 \\
Naive RAG        & 73.07 & 80.50 & 43.68 & 42.20 & 52.15 & 54.80 & 67.41 & 75.00 & 34.53 & 35.00 & 50.23 & 52.80 & 53.51 & 56.71 \\
ToG              & 40.07 & 48.50 & 16.76 & 21.60 & 35.88 & 38.60 & 23.56 & 35.30 & 8.39 & 13.90 & 20.22 & 23.10 & 24.15 & 30.16 \\
RAPTOR         & 66.75 & 68.00 & 33.10 & 32.20 & 40.11 & 43.90 & 62.66 & 68.60 & 29.35 & 28.90 & 36.25 & 40.20 & 44.70 & 46.97 \\
LightRAG (local)         & 57.05 & 61.80 & 41.72 & 43.80 & 60.17 & 64.90 & 54.25 & 54.30 & 36.57 & 37.70 & 54.61 & 55.11 & 50.73 & 52.94 \\
LightRAG (global)         & 35.28 & 39.10 & 20.56 & 22.50 & 23.20 & 26.30 & 30.79 & 30.80 & 20.11 & 20.30 & 23.41 & 24.00 & 25.56 & 27.17 \\
LightRAG (hybrid)         & 54.87 & 59.10 & 39.57 & 42.20 & 55.67 & 60.10 & 54.32 & 54.50 & 34.21 & 35.10 & 47.79 & 48.53 & 47.74 & 49.92 \\
\rowcolor{hippogreen}
HippoRAG2        & 71.72 & 80.20 & 45.04 & 47.20 & 67.65 & 71.70 & \textbf{73.33} & 79.00 & 42.92 & 41.60 & 62.40 & 66.70 & 60.50 & 64.40 \\
Clue-RAG         & 62.87 & 67.30 & 33.01 & 33.90 & 45.63 & 47.40 & 54.95 & 69.20 & 27.39 & 27.80 & 35.87 & 46.30 & 43.29 & 48.65 \\
\midrule
\rowcolor{bubbleblue}
\textbf{BubbleRAG} & \textbf{74.35} & \textbf{82.60} & \textbf{53.03} & \textbf{53.10} & \textbf{72.54} & \textbf{76.60} 
& 71.82 & \textbf{79.10} & \textbf{45.15} & \textbf{44.00} & \textbf{64.97} & \textbf{67.60} & \textbf{63.02} & \textbf{66.63} \\ 
\bottomrule
\end{tabular}
\end{adjustbox}
\end{table*}

\subsection{Experimental Setup}

\stitle{Baselines.}
To strictly evaluate the performance of BubbleRAG, we selected a diverse set of baselines categorized into four groups: general LLMs and standard retrieval methods, iterative multi-hop methods, random-walk methods, and pre-indexed structure methods.

\stitle{General LLMs and Standard Retrieval methods.} We first evaluate foundational models and standard pipelines. Vanilla LLM relies solely on parametric knowledge to answer queries without external retrieval. LLM + CoT\cite{2023-chainofthought} enhances this by utilizing a standardized Chain-of-Thought template to guide step-by-step reasoning before generating a final answer. NaiveRAG\cite{2024-surveyretrievalaugmentedgenerationl} represents the standard pipeline, retrieving text chunks based on vector embedding similarity and concatenating them as context for the LLM.

\stitle{Iterative multi-hop methods.} This paradigm constructs explicit reasoning chains across graph topologies. We evaluate ToG (Think-on-Graph) \cite{2024-thinkongraph}, which performs a constrained beam search over the knowledge graph to discover relational paths connecting query entities to potential answers.

\stitle{Random-walk methods} This category captures structural centrality using network algorithms. We evaluate HippoRAG2\cite{2025-ragmemory}, which integrates passage retrieval into the Personalized PageRank (PPR) algorithm \cite{1999-pagerank} to compute associative relationships between semantic query terms and document chunks.

\stitle{Pre-indexed structure methods.} We compare against frameworks that rely on pre-built hierarchical or multi-partite structures. RAPTOR \cite{2024-raptor} constructs a hierarchical tree to cluster and summarize documents, enabling retrieval across different levels of abstraction. ClueRAG builds a static multi-partite graph index to model interactions between questions, documents, and concepts, utilizing the LLM to navigate this structure. LightRAG \cite{2025-lightrag} indexes documents into a comprehensive graph of entities and relations. We assess its three distinct retrieval modes: LightRAG-L (Local) for low-level entity details, LightRAG-G (Global) for broader conceptual queries, and LightRAG-H (Hybrid), which combines both strategies.

\stitle{Datasets.}
We evaluated BubbleRAG on three benchmark datasets designed for multi-hop Question Answering (QA): MuSiQue\cite{2022-musique}, HotpotQA\cite{2018-hotpotqa}, and 2WikiMultiHopQA \cite{2020-multihopqa}. These datasets require aggregating information from multiple documents to derive the correct answer. Following established protocols \cite{2024-hipporag,2022-musique}, we used the validation split of each dataset (consisting of 1,000 questions per dataset) for evaluation. The corresponding corpora were preprocessed into chunks to serve as the retrieval source.

\stitle{Metrics.}
Consistent with previous research\cite{2024-hipporag,2025-cluerag}, we employ two primary metrics to assess performance. First, we use the F1 Score to measure the lexical overlap between the generated answer and the ground truth, assessing the precision and recall of the exact tokens appearing in the prediction. Second, since multi-hop questions often have valid answers that differ in phrasing (e.g., synonyms or sentence structures), exact matching can be misleading. To address this, we employ LLM-as-a-Judge Accuracy ($ACC_L$). We utilize Qwen-3-7B\cite{2025-qwen3} as a judge to evaluate the semantic equivalence between the model's output and the gold standard, assigning a score of 1 for correct answers and 0 otherwise.

\stitle{Implementation Details.}
To ensure a fair comparison across all methods, we use Qwen3-Embedding-8B\cite{2025-qwen3embedding} embedding model for every framework in our experiments. In addition, following common fairness settings in prior work~\cite{2025-cluerag}, we impose a unified retrieval budget: each method is allowed to observe at most $15$ text chunks per query. This constraint prevents performance gains that stem purely from larger contexts and ensures that improvements reflect retrieval quality rather than input size. Unless otherwise specified, all other hyperparameters follow the default configurations recommended in the original implementations.

\subsection{Main Results}

Table \ref{tab:main_results} presents the comparative performance of BubbleRAG and ten baseline methods across three multi-hop QA benchmarks using both 30B and 8B parameter models. The results demonstrate the consistent superiority of BubbleRAG, which achieves the highest average F1 and LLM-as-a-Judge Accuracy scores across all settings. Specifically, on the 30B model setting, BubbleRAG surpasses the strongest baseline, HippoRAG2, by an average of 2.52\% in F1 and 2.23\% in accuracy. BubbleRAG's superior F1 score reflects its balanced optimization of recall (via group-covering expansion) and precision (via discriminative ranking).

The advantage of BubbleRAG is particularly pronounced on the MuSiQue dataset, which requires complex multi-hop reasoning. While other structure-aware baselines like Clue-RAG and LightRAG struggle to surpass Naive RAG on this dataset, BubbleRAG achieves an F1 score of 53.03 with the 30B model, outperforming HippoRAG2 by approximately 8 percentage points. MuSiQue requires 3--4 hop reasoning, where existing methods' single-anchor or fixed-hop strategies break down. BubbleRAG's group-aware expansion naturally handles variable-length reasoning chains without predefined hop limits. Unlike approaches prone to error propagation, the heuristic Steiner Tree search in BubbleRAG ensures semantically grounded and complete reasoning chains.

Furthermore, BubbleRAG maintains its leading position across model scales. Notably, the method achieves an average F1 of 63.02 using the smaller 8B model, which is comparable to, and often better than, the performance of many baselines using larger 30B models. This demonstrates that retrieval quality, not model size, is the primary bottleneck for multi-hop QA: the high-quality context provided by BubbleRAG effectively compensates for the reduced parametric knowledge of smaller models.

\subsection{Ablation Studies}

To validate the contribution of each component in BubbleRAG and analyze the impact of key hyper parameters, we conducted comprehensive ablation studies and sensitivity analyses. All ablation experiments use Qwen3-8B\cite{2025-qwen3} as the backbone LLM to ensure fair comparison.

\stitle{w/o Anchor Specialization:} To evaluate the importance of our query-conditioned constraint refinement, we replaced the LLM-based rewriting module with a direct keyword extraction approach. In this variant, we extract explicit entities from the query using a standard extractor and map them to graph nodes without refining generic terms (e.g., \textit{mother}) into specific constraints (e.g., \textit{Lothair II's mother}). This setting tests the system's ability to handle ambiguity without query-specific context injection.

\stitle{w/o Schema Relaxation:} To assess the contribution of the chunk-guided schema relaxation mechanism, we retained the constraint refinement but removed the retrieval of text chunks prior to anchor selection. In this setting, the system relies strictly on the refined keywords to locate anchors in the knowledge graph, without the flexibility to relax schema-sensitive conditions based on local text evidence.

\stitle{w/o CEG Ranking:} To verify the necessity of the discriminative ranking mechanism, we removed the scoring function defined in Eq.~(\ref{eq:score}). Instead of re-evaluating the candidate subgraphs based on semantic dissonance and structural completeness, we simply selected the first valid connected subgraph returned by the heuristic expansion. This variant highlights the role of the ranking module in filtering out structurally connected but semantically irrelevant noise.

\begin{table}[t]
    \centering
    \caption{Ablation study of BubbleRAG components on the 2Wiki and HotpotQA datasets. We report the F1 score and LLM-as-a-Judge Accuracy (ACC). The \textit{Full Method} row represents the standard BubbleRAG performance.}
    \label{tab:ablation_results}
    \begin{tabular}{lcccc}
    \toprule
    \multirow{2}{*}{\textbf{Method Variant}} & \multicolumn{2}{c}{\textbf{2Wiki}} & \multicolumn{2}{c}{\textbf{HotpotQA}} \\
    \cmidrule(lr){2-3} \cmidrule(lr){4-5}
     & \textbf{F1} & \textbf{ACC} & \textbf{F1} & \textbf{ACC} \\
    \midrule
    \textbf{BubbleRAG (Full Method)} & \textbf{64.97} & \textbf{67.60} & \textbf{71.82} & \textbf{79.10} \\
    \midrule
    w/o Anchor Specialization & 60.45 & 65.00 & 69.28 & 76.10 \\
    w/o Schema Relaxation & 53.62 & 59.20 & 66.65 & 75.30 \\
    w/o CEG Ranking & 58.76& 64.40 & 70.20 & 78.90 \\
    \bottomrule
    \end{tabular}
\end{table}

\stitle{Ablation Analysis:}
The ablation experiments are conducted on the full validation sets of 2Wiki and HotpotQA (1,000 questions each). As shown in Table~\ref{tab:ablation_results}, removing each component leads to measurable performance degradation, validating the importance of our design choices. The most critical component is:

\sstitle{Schema Relaxation}, whose removal causes the largest performance drop: F1 decreases by 11.35 points on 2Wiki and 5.17 points on HotpotQA. This confirms that chunk-guided anchor selection effectively bridges the gap between schema-level graph entities and textual evidence, preventing the system from anchoring on semantically mismatched nodes.

\sstitle{Anchor Specialization} also demonstrates significant value, with F1 drops of 4.52 and 2.54 points on 2Wiki and HotpotQA respectively. This indicates that query-conditioned constraint rewriting is essential for resolving entity ambiguity. Without refining generic terms, the system fails to locate the correct anchor nodes in the knowledge graph.

\sstitle{CEG Ranking} contributes to performance, though its impact varies across datasets. On 2Wiki, removing it causes a 6.21-point F1 drop, while on HotpotQA the degradation is relatively minor (1.62 points). This suggests that the ranking module is particularly crucial for complex reasoning tasks requiring precise evidence selection, while simpler datasets can tolerate some noise in the retrieved context.

In summary, removing Anchor Specialization and Schema Relaxation primarily reduces \emph{recall} (the system fails to find correct anchors), while removing CEG Ranking reduces \emph{precision} (noisy candidates are selected). This validates BubbleRAG's design of addressing recall and precision through dedicated pipeline stages.

\subsection{Parameter Sensitivity}

We further analyze how the system's performance and efficiency fluctuate with key hyperparameters: the expansion budget $B$, the reasoning depth $d$, and the penalty factor $\alpha$. The localization hop $h$ is held constant at $h=6$ throughout, as preliminary experiments showed this value consistently provides sufficient local coverage for the datasets studied.

\begin{table}[h]
    \centering
    \caption{Sensitivity analysis on the 2Wiki dataset. We report F1 scores and average Inference Latency (Lat.) in seconds. The default settings are $B=10, h=6, d=6, \alpha=1.0$.}
    \label{tab:sensitivity}
    \renewcommand{\arraystretch}{1.1}
    \setlength{\tabcolsep}{8pt}
    \begin{tabular}{lccc}
    \toprule
    \textbf{Parameter} & \textbf{Value} & \textbf{F1 Score} & \textbf{Lat. (s)} \\
    \midrule
    \multirow{4}{*}{Budget ($B$)}
      & 5  & 59.48 & 18.72 \\
      & 10 & 60.52 & 20.99 \\
      & 20 & 58.32 & 33.67 \\
      & 50 & 60.42 & 44.98 \\
    \midrule
    \multirow{4}{*}{Depth ($d$)}
      & 2 & 58.25 & 16.17 \\
      & 4 & 59.07 & 19.39 \\
      & 6 & 60.52 & 20.99 \\
      & 8 & 62.28 & 32.67 \\
    \midrule
    \multirow{4}{*}{Penalty ($\alpha$)}
      & 0.1 & 59.11 & -- \\
      & 1.0 & 60.52 & -- \\
      & 2.0 & 60.37 & -- \\
      & 5.0 & 57.35 & -- \\
    \bottomrule
    \end{tabular}
\end{table}

\stitle{Sensitivity Analysis:}
The parameter sensitivity analysis is conducted on a randomly sampled subset of 100 queries from 2Wiki. As shown in Table~\ref{tab:sensitivity}, this analysis reveals the impact of key hyperparameters on both performance and efficiency.

\sstitle{Expansion Budget ($B$):} The budget parameter controls the number of candidates explored during bubble expansion. We observe that $B=10$ achieves an optimal balance, yielding an F1 of 60.52 with a moderate latency of 20.99 seconds. Increasing $B$ to 20 or 50 does not yield meaningful performance gains (F1 drops to 58.32 and 60.42 respectively) but significantly increases latency to 33.67s and 44.98s. This indicates that the bubble expansion algorithm efficiently identifies high-quality candidates early, and excessive exploration introduces noise without improving answer quality.

\sstitle{Reasoning Depth ($d$):} As the reasoning depth increases from 2 to 8, we observe consistent F1 improvements: 58.25 $\rightarrow$ 59.07 $\rightarrow$ 60.52 $\rightarrow$ 62.28. However, the latency grows substantially, particularly when $d=8$ (32.67s). This suggests that deeper reasoning chains are beneficial for complex multi-hop questions but come with a significant computational cost. For practical deployment, $d=6$ offers the best trade-off between accuracy and speed.

\sstitle{Penalty Factor ($\alpha$):} The penalty factor in Eq~(\ref{eq:penalty}) controls the strictness of the structural completeness requirement: higher $\alpha$ penalizes CEGs that miss anchor groups more severely. We observe that the performance peaks at $\alpha = 1.0$ (yielding an F1 of 60.52) but experiences a noticeable drop when $\alpha$ is increased to 5.0 (F1 drops to 57.35). This indicates that while a moderate completeness enforcement is necessary to capture diverse query concepts, an overly strict requirement ($\alpha=5.0$) hurts performance. Enforcing a rigid AND-style conjunction can overly penalize highly relevant candidate graphs that simply miss an unimportant anchor.

\subsection{Efficiency \& Cost Analysis}
\label{sec:efficiency}

Given that BubbleRAG involves a multi-stage pipeline with LLM interactions (Steps 2 \& 5) and heuristic graph algorithms (Steps 3--4), we evaluate its computational overhead against representative baselines. We conducted this analysis on a randomly sampled subset of 100 queries to measure the average build time, inference latency, and token consumption.

\begin{table}[h]
    \centering
    \caption{Efficiency comparison on 100 sampled queries (single A100 GPU). Latency: average seconds per query. Query: total tokens consumed across all 100 queries at inference time. Index: total tokens for offline corpus indexing}
    \label{tab:efficiency}
    \renewcommand{\arraystretch}{1.1}
    \setlength{\tabcolsep}{8pt}
    \begin{tabular}{lccc}
    \toprule
    \textbf{Method} & \textbf{Latency (s)} & \textbf{Query} & \textbf{Index} \\
    \midrule
    Naive RAG & \textbf{0.67} & 249,476 & - \\
    HippoRAG2 & 4.26 & 418,812 & 4,575,580\\
    ToG & 45.93 & 765,915 & - \\
    \textbf{BubbleRAG} & 20.99 & 1,064,052 & 3,840,320 \\
    \bottomrule
    \end{tabular}
\end{table}

Table~\ref{tab:efficiency} presents a comprehensive efficiency comparison. While Naive RAG has the lowest latency (0.67s) due to simple vector retrieval, its performance is significantly inferior. BubbleRAG (20.99s) is substantially faster than ToG (45.93s) while delivering superior accuracy, demonstrating the efficiency of our heuristic bubble expansion algorithm compared to ToG's iterative LLM-based graph traversal.

Compared to HippoRAG2 (4.26s), BubbleRAG incurs higher latency. The $\approx$5$\times$ latency difference is largely attributable to the LLM interactions in anchor grouping (Step~2) and reasoning-aware expansion (Step~5). This overhead is justified by its consistent performance gains across all benchmarks. The token consumption analysis reveals that BubbleRAG (1,064,052 tokens) uses more tokens than baselines, which is expected as it encodes graph structure into prompts. However, this cost is offset by the higher answer quality, and the CEG Ranking phase ensures only relevant evidence reaches the generation stage, avoiding wasteful processing of noisy contexts.

Furthermore, BubbleRAG's index construction cost (3,840,320 tokens) is comparable to HippoRAG2 (4,575,580 tokens) and is amortized across all queries, making it practical for real-world deployments where the knowledge base is queried repeatedly.

\section{Related Work}

We review existing approaches for knowledge graph retrieval.

\stitle{Query Rewriting and Schema-Aligned Matching.}
Methods in this paradigm, such as SimGRAG \cite{2025-simgrag} and KG-GPT \cite{2023-kggpt}, align queries with pre-conceived structural templates. SimGRAG utilizes a two-stage process where LLM transforms a query into an abstract pattern graph for subgraph alignment. Conversely, KG-GPT employs a divide-and-conquer strategy, segmenting complex queries into discrete \textit{(head, relation, tail)} triple structures to retrieve candidate subgraphs. These methods rely on the LLM's internal prior knowledge. When generated patterns do not exist in the underlying topology, it results in hallucinations and retrieval gaps.

\stitle{Iterative Multi-Hop Exploration.}
This paradigm initiates retrieval at semantic seed nodes and expands outward to construct reasoning chains. ToG \cite{2024-thinkongraph} implements a constrained beam search directly on the topology, while ToG2 \cite{2025-thinkongraph2} incorporates document embeddings to filter contextually irrelevant nodes during traversal. RoG \cite{2024-reasoninggraphs} generates a discrete sequence of relation types as a search plan, executing a constrained breadth-first search to extract grounded paths. These mechanisms are highly sensitive to initial anchor extraction; a single misalignment at the seed node stage causes cascading retrieval failures across the multi-hop chain.

\stitle{Stochastic Traversal via Random Walks.} Methods in this category, such as HippoRAG, HippoRAG2 \cite{2024-hipporag, 2025-ragmemory}, LinearRAG \cite{2025-linearrag}, and AGRAG \cite{2025-agrag}, utilize algorithms like Personalized PageRank (PPR) \cite{1999-pagerank} from semantic anchors to capture structural centrality and associative memory. HippoRAG extracts named entities from the query as seed nodes, executing PPR over an open knowledge graph to distribute probability mass and rank passages. HippoRAG2 extends this by integrating passage nodes directly into the graph topology alongside phrase nodes, assigning PPR reset probabilities based on embedding similarity. LinearRAG constructs a relation-free hierarchical graph of entity, sentence, and passage nodes, applying PPR for global importance aggregation from locally activated seed entities. AGRAG computes PPR to establish node influence scores and combines them with semantic edge costs to extract a Minimum Cost Maximum Influence (MCMI) reasoning subgraph. Because probabilistic walks necessitate a starting point, their retrieval performance is upper-bounded by the accuracy of the initial anchor extraction; erroneous anchors will misguide the subsequent topological exploration.

\stitle{Structure-Augmented Retrieval with Auxiliary Graphs.} Methods like GraphRAG \cite{2024-edge2024local}, KAG \cite{2024-kag}, and ClueRAG \cite{2025-cluerag} rely on pre-processing to construct hierarchical indices or community structures. GraphRAG extracts an entity knowledge graph from source documents, applies the Leiden algorithm to partition the graph into a nested hierarchy of modular communities, and uses a language model to generate bottom-up summaries for these partitions. KAG constructs a mutual-index between the graph structure and raw text chunks using a hierarchical framework, applying semantic reasoning during offline indexing to complete concept relations and align fragmented instances. ClueRAG builds a multi-partite graph index comprising three distinct node layers: text chunks, knowledge units, and entities. Because these static, pre-built structures are domain-dependent and constructed independently of user queries, they are poorly suited for the diverse and specific requirements of dynamic real-world queries.

BubbleRAG differs fundamentally from all four paradigms. Unlike query rewriting methods, it does not assume prior schema knowledge or rely on the LLM's ability to generate structurally valid graph patterns. Unlike iterative multi-hop and random-walk methods, it initializes from \emph{groups} of anchors rather than a single seed node, avoiding the cascading failure mode inherent in single-anchor strategies. Unlike pre-indexed structure methods, its evidence structures are query-specific and dynamically constructed at retrieval time. Most distinctively, BubbleRAG \emph{jointly optimizes} for recall (through group-covering bubble expansion) and precision (through discriminative CEG ranking), treating retrieval as a formal optimization problem (OISR) rather than a traversal heuristic.

\section{Conclusion}

We addressed the problem of retrieval from black-box knowledge graphs, where the retriever has no access to the graph's schema. We formalized this as the Optimal Informative Subgraph Retrieval (OISR) problem, a variant of Group Steiner Tree that captures the dual goals of semantic coverage (recall) and information density (precision), and proved it to be NP-hard and APX-hard. Based on this formalization, we proposed BubbleRAG, a training-free pipeline that systematically addresses recall through semantic anchor grouping and heuristic bubble expansion, and precision through candidate evidence graph ranking and reasoning-aware expansion.

Comprehensive experiments on multi-hop QA benchmarks demonstrate that BubbleRAG consistently outperforms state-of-the-art baselines, including HippoRAG2, achieving superior accuracy and F1 scores. Notably, our framework exhibits exceptional robustness, delivering competitive performance even with smaller language models (8B) compared to baselines using larger models (30B). BubbleRAG's plug-and-play design requires no retriever fine-tuning or modifications to the underlying KG structure, and its localized subgraph construction ensures scalability to massive knowledge graphs.

\balance
\bibliographystyle{ACM-Reference-Format}
\bibliography{reference}

\end{document}